\newcommand{\topklists}[0]{top $k$ lists}
\newcommand{\ie}[0]{\textit{i.e.,}}
\newcommand{\eg}[0]{\textit{e.g.,}}
\newcommand{\ignore}[1]{}
\newcommand{\topklist}[0]{top $k$ list}
\begin{document}

\setcounter{footnote}{1}
\title{\Large An information measure for comparing top $k$ lists}
\author{Arun S. Konagurthu~\footnote{Clayton School of Computer Science and Information Technology, Monash University, VIC 3800 Australia.}
\setcounter{footnote}{0}
\footnote{Correspondence: arun.konagurthu@monash.edu}\\
\and 
James H. Collier $^\dagger$}
\date{}

\maketitle


\begin{abstract} 
Comparing the top $k$ elements between two or more ranked results is a common task in many contexts and settings.  A few measures have been proposed to compare top $k$ lists with attractive mathematical properties, but they face a number of pitfalls and shortcomings in practice. This work introduces a new measure to compare any two top $k$ lists based on measuring the information these lists convey. Our method investigates the compressibility of the lists, and the length of the message to losslessly encode them gives a natural and robust measure of their variability.  This information-theoretic measure objectively reconciles all the main considerations that arise when measuring  (dis-)similarity between lists: the extent of their non-overlapping elements in each of the lists; the amount of disarray among overlapping elements between the lists;  the measurement of displacement of actual ranks of their overlapping elements. 
\end{abstract}

\section{Introduction}
\subsection{Motivation}
Ranked results are handled in diverse settings, from the web page results of search engines to genes in differential gene co-expression experiments. A routine task that emerges from such ranking results is the assessment of variability among its top few ($k$) elements between two of more such rankings. 

This topic has received much attention over the past decade, mainly in the context of information retrieval. Among the most cited work on this topic is that of Fagin and colleagues~\cite{fagin2003comparing}. They proposed an easy to compute metric based primarily on Spearman's foot rule~\cite{spearman1904proof}. Formally,  if $\pi_1$ and $\pi_2$ define two permutations from the symmetric group $S_n$ of all permutations of $n$ elements, Spearman's foot rule gives the $L_1$ distance between the ranks of corresponding elements in the two permutations, as:
$
L_1(\pi_1,\pi_2) = \sum_{i=1}^{n} |\pi_1(i)-\pi_2(i)|,
$
where any $\pi_1(i)$ or $\pi_2(i)$ is the position (rank) of the $i$th element in the permutation, given some total ordering of $n$ elements. Fagin \textit{et al.} extend this metric to comparing two \topklists\ in presence of non-overlapping elements (\ie\ elements that are in one list but not in the other). This is achieved by fixing the contribution, to the distance,  of the non-overlapping elements to a value greater than $k$, typically $(k+1)$. Formally, the extended metric for two \topklists\ $\tau_1$ and $\tau_2$ is defined as
$$
L_1(\tau_1,\tau_2) = 2(k-|\tau_1\cap\tau_2|)(k+1)+\sum_{i\in \tau_1\cap\tau_2} |\tau_1(i)-\tau_2(i)| - \sum_{i\in \tau_1-\tau_2} \tau_1(i) - \sum_{i\in\tau_2-\tau_1} \tau_2(i)
$$
where $\tau_1\cap\tau_2$ is the set of elements that overlap between the two lists, $|\tau_1\cap\tau_2|$ denotes the number of overlapping elements, $\tau_1-\tau_2$ gives the non-overlapping elements in $\tau_1$, and $\tau_2-\tau_1$ gives those in $\tau_2$.

Although this measure can  be shown to have good mathematical properties , in practice, it has crucial limitations. Mainly, it can be seen from the formulation that the term $2(k-|\tau_1\cap\tau_2|)(k+1)$ grows quadratically for increasing values of $k$ and decreasing proportion of overlapping elements. In fact, in many applications requiring comparison of \topklists\ (\eg\ web search results), non-overlapping elements form a significant proportion of the lists. Furthermore, this metric is insensitive to the absolute ranks of the overlapping elements in the respective lists; when computing the $L_1$ distance, the overlapping elements are re-ranked, and hence ignore the displacement of these elements when comparing two lists.

Other mathematically attractive metrics have also been proposed; for example, those based on Kendall tau distance~\cite{kendall1938new}.  Colloquially, this distance is called the \textit{bubble-sort distance} since it measures the number of adjacent transpositions required to convert (i.e., sort) one permutation to another.  Formally, for any two permutations $\pi_1$ and $\pi_2$, Kendall tau distance is defined (using the same notations as above) as
$
K(\pi_1,\pi_2) = \sum_{\forall 1\le i<j\le n} \kappa_{i,j}(\pi_1,\pi_2),
$
where $\kappa_{i,j}(\pi_1,\pi_2) = 0$ if $\pi_1(i)<\pi_1(j)$ and $\pi_2(i) <\pi_2(j)$, or $\kappa_{i,j}(\pi_1,\pi_2) = 1$ otherwise. Extending this idea, the following cost function was proposed to compare two \topklists~\cite{fagin2006comparing}:
$$
K(\tau_1,\tau_2) = (k-|\tau_1\cap\tau_2|)((2+p)k-p|\tau_1\cap\tau_2|+1-p)+\sum_{i\in \tau_1\cap\tau_2} \kappa_{i,j}(\tau_1,\tau_2) - \sum_{i\in \tau_1-\tau_2} \tau_1(i) - \sum_{i\in\tau_2-\tau_1} \tau_2(i)
$$
where, $p$ is a tunable penalty parameter to account for the transposition distance between non-overlapping elements in $\tau_1$ and $\tau_2$. However, it is easy to see that this metric is also sensitive to the size of non-overlapping elements in the two lists, in addition to the choice of penalty parameter $p$.

There is further work on this problem, mainly designing measures targeted at focussed applications~\cite{bar2006methods,budinskapackage,fury2006overlapping,pearson2007reciprocal,jurman2009canberra,jurman2012algebraic}. Noteworthy among these is the use of Canberra distance~\cite{lance1966computer} to measure distance between \topklists~\cite{jurman2012algebraic}. This distance is a weighted variant of Spearman's $L_1$ distance, which ensures that the displacement of elements with higher ranks results in a greater penalty compared to those with lower ranks. 

\subsection{Our results}
In this paper, we introduce a new information measure to compare any two top $k$ lists.
We build our method on the statistical framework of minimum length encoding introduced by Chris Wallace~\cite{wallace68,WallaceBook}. Our method investigates the compressibility of \topklists. It is intuitive to see that closely related lists have more information in common (and hence more compressible) than the lists that are poorly related. Thus, the length of the lossless encoding message gives a natural and rigorous measure to estimate the variability between two lists. 
Unlike previous work, this measure implicitly allows an \textit{objective} trade-off between conflicting criteria when measuring the variability between two lists. Mainly, these include: (1) the measurement of the extent of non-overlap in the two lists, (2) the measurement of disarray of its overlapping elements, and (3) the displacement of the positions (ranks) of these elements.

We note that measuring the \textit{true} information content of any data is incomputable. This follows from the fact that Solomonoff-Kolmogorov-Chaitin Complexity~\cite{kolmogorov1963tables,solomonoff1964formal,chaitin1966length} is undecidable. However, effective and efficient statistical models for data compression provide reasonable upper bounds (\ie\ estimates) of true information content.

Our paper provides an approach to estimating the information content in any given pair of \topklists. To keep this approach general, our models of compression use \textit{bland} assumptions and priors. However it is important to note that this information theoretic framework  can be adapted to individual contexts by accommodating prior knowledge about rankings in those settings.

\subsection{Organization of the paper}
This paper is organized as follows.
Section \ref{sec:information} introduces our information measure formally and describes some interesting mathematical properties. Section \ref{sec:practical} explains the practical details involved in estimating the information content of two lists. Section \ref{sec:results} presents the results of comparing this measure with other popular distance metrics on ranked lists.

\section{Information measure on comparing ranked lists}
\label{sec:information}

\begin{Definition} (Information content of an outcome)\\ 
\label{obs:neglogprob}
Information conveyed by any outcome or event $E$ whose probablity is $P(E)$  is given by $I(E) = -\log\left(P(E)\right)$.\footnote{Base of the logarithm gives the information measure its units. $\log_2$ yields information measured in bits. $\ln$ gives nits or nats, and  $\log_{10}$, dits or hartleys.}
\end{Definition}
This is a fundamental result of Shannon's seminal work on the theory of communication~\cite{shannon48}. 

\begin{lemma} (Measure of Information between two top $k$ lists)\\
For two \topklists, $\tau_1$ and $\tau_2$, the total amount of information contained in them is  
$
I(\tau_1, \tau_2) = 
  \min\left\{
    I(\tau_1)+ I(\tau_2),
    I(\tau_1) + I(\tau_2|\tau_1), 
    \right\}
$
\end{lemma}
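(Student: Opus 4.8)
\emph{Proof sketch (proposal).} The plan is to read $I(\tau_1,\tau_2)$ operationally, via Definition~\ref{obs:neglogprob}, as the length of the shortest lossless message communicating the \emph{pair} $(\tau_1,\tau_2)$ — equivalently, $-\log$ of the probability that the sender and receiver's shared model assigns to that pair. Under this reading the claimed identity is really the assertion that, among the natural ways of organising such a message, the two listed ones are exhaustive, and the minimum message length principle forces us to take whichever is shorter.

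First I would establish the ``$\le$'' direction by exhibiting two concrete lossless codes for the pair. The first transmits $\tau_1$ under a code optimal for $\tau_1$ alone and then $\tau_2$ under a code optimal for $\tau_2$ alone, treating the lists as unrelated; its length is $I(\tau_1)+I(\tau_2)$. The second transmits $\tau_1$ as before and then $\tau_2$ under a code that exploits the already-received $\tau_1$ (spending few bits on overlapping elements and on small rank displacements); applying Definition~\ref{obs:neglogprob} to the conditional probability $P(\tau_2\mid\tau_1)$, its length is $I(\tau_1)+I(\tau_2\mid\tau_1)$. Both messages are uniquely decodable and reconstruct $(\tau_1,\tau_2)$, so the information content of the pair is at most each of these quantities, hence at most their minimum.

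For equality I would invoke the product rule $P(\tau_1,\tau_2)=P(\tau_1)\,P(\tau_2\mid\tau_1)$, which on taking $-\log$ gives $I(\tau_1,\tau_2)=I(\tau_1)+I(\tau_2\mid\tau_1)$ whenever the model may capture dependence, and collapses to $I(\tau_1,\tau_2)=I(\tau_1)+I(\tau_2)$ precisely when the model declares $\tau_1$ and $\tau_2$ independent. Since the framework lets us choose, after seeing the data, the model/code assigning the largest probability to the observed pair — the shortest description — the information content coincides with the smaller of the two. I would also note that the apparent asymmetry (conditioning only on $\tau_1$) is illusory: the product rule likewise gives $I(\tau_1)+I(\tau_2\mid\tau_1)=I(\tau_2)+I(\tau_1\mid\tau_2)$, so the symmetric conditional term — presumably the one elided after the trailing comma — adds nothing to the minimum.

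The main obstacle, and the point I would treat most carefully, is that pointwise the inequality $I(\tau_2\mid\tau_1)\le I(\tau_2)$ can \emph{fail}: conditioning reduces information only on average, so for ``surprisingly dissimilar'' lists the dependent code is actually longer, and it is exactly this that makes the $\min$ — rather than just the conditional term — necessary. Relatedly, ``shortest lossless code'' is only well defined once a model class is fixed; the true, model-free information content is incomputable, as already remarked about Kolmogorov complexity. Hence the equality should be read as holding within the class of codes the paper actually constructs in Section~\ref{sec:practical} (the independent code and the $\tau_1$-conditioned code), and making that scoping explicit is what turns the intuitive argument above into a rigorous one.
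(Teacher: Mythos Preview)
Your proposal is correct and follows essentially the same approach as the paper: a two-case split into ``independent'' versus ``related'' lists, with the product rule $P(\tau_1,\tau_2)=P(\tau_1)P(\tau_2\mid\tau_1)$ handling the latter via Definition~\ref{obs:neglogprob}. Your discussion of why the $\min$ is genuinely needed (because $I(\tau_2\mid\tau_1)\le I(\tau_2)$ can fail pointwise) and of the scoping to a fixed class of codes is in fact more careful than the paper's own argument, which simply asserts the null-model upper bound without engaging these subtleties.
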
 
\begin{proof}
\noindent Two scenarios arise in measuring the joint information in the \topklists. The lists are either independent of each other, or they  are related.

\begin{enumerate}[(i)]
\item If $\tau_1$ and $\tau_2$ are independent of each other, that is the knowledge of one list does not inform the contents of the other list, the joint information content in these lists is the sum of the information content in each of the lists taken separately, \ie\ $I(\tau_1)+I(\tau_2)$. Lets term this expression as the null model message length, denoted by $NULL(\tau_1,\tau_2)$. 

\item On the other hand, if $\tau_1$ and $\tau_2$ are related to each other, the knowledge of one list informs the contents of the other, to a less or more extent. Hence the additional information required to convey $\tau_2$ will be \textit{less} than stating it independently. From Bayes's theorem we have $P(\tau_1,\tau_2) = P(\tau_1)P(\tau_2|\tau_1)$. Applying Observation 1.1, we get  $I(\tau_1, \tau_2) = I(\tau_1) + I(\tau_2|\tau_1)$.
\end{enumerate}

It follows that the amount of information in the two lists is bounded above by $NULL(\tau_1,\tau_2)$ and is measured as $I(\tau_1) + I(\tau_2|\tau_1)$. 
In addition, the following also holds:
$I(\tau_1,\tau_1) \le I(\tau_1,\tau_2) \le I(\tau_1)+I(\tau_2)$
\end{proof}

\begin{Definition} (Information cost)\\
We measure information cost (or divergence) between two \topklists\ as $I(\tau_1,\tau_2)-I(\tau_1,\tau_1) = I(\tau2|\tau_1) - I(\tau_1|\tau_1)$.
\end{Definition}

\begin{property}
For three \topklists, $\tau_1$, $\tau_2$ and $\tau_3$,
$I(\tau_1,\tau_2) - I(\tau_1,\tau_3) = \log\left(\frac{P(\tau_3|\tau_1)}{P(\tau_2|\tau_1)}\right)$
\end{property}
From Lemma 2.1, we have
$I(\tau_1,\tau_2) = I(\tau_1) + I(\tau_2|\tau_1)$ and 
$I(\tau_1,\tau_3) = I(\tau_1) + I(\tau_3|\tau_1)$

Subtracting the two,
\begin{eqnarray*}
I(\tau_1\tau_2) - I(\tau_1,\tau_3) &=& I(\tau_2|\tau_1) - I(\tau_3|\tau_1)\\
&=& -\log\left(P(\tau_2|\tau_1)\right) -\log\left(P(\tau_3|\tau_1)\right)\\
&=& \log\left(\frac{P(\tau_3|\tau_1)}{P(\tau_2|\tau_1)}\right)
\end{eqnarray*}
This property gives good foundation to compare any two lists.

\begin{property} (Measure of information is symmetric)\\
For optimal encodings of $\tau_1$, $\tau_2$, $\tau_1|\tau_2$ and $\tau_2|\tau_1$, $I(\tau_1,\tau_2)\equiv I(\tau_2,\tau_1)$. 
\end{property}
This follows from Bayes's theorem. $P(\tau_1,\tau_2) = P(\tau_1)P(\tau_2|\tau_1) = P(\tau_2)P(\tau_1|\tau_2)$. Applying Definition 2.1 to Bayes's, this property follows. However, as discussed earlier, this property holds only when dealing with  true measures of information content.  From the wide data compression literature, it can be seen that symmetry holds approximately up to some constant, which is dependent on the encoding scheme rather than the data itself.

A corollary of this property is that the \textit{conditional information} between $\tau_1$ and $\tau_2$ is \textit{not} symmetric: $I(\tau_1|\tau_2) = I(\tau_2|\tau_1) + \delta$, where $\delta = I(\tau_2)-I(\tau_1)$ is dependent on the information content in the respective lists. 

\begin{property} (Directed acyclic triangular inequality of conditional information)\\
For three \topklists, $\tau_1$,  $\tau_2$, and $\tau_3$, we have:\\
\begin{center}
$
I(\tau_1|\tau_2) \le  
I(\tau_1|\tau_3) +  
I(\tau_3|\tau_2)   
$
{
\begin{tikzpicture}[scale=6]
\ignore{[%
->,
shorten >=2pt,
>=stealth,
node distance=1cm,
pil/.style={
->,
thick,
shorten =2pt,}
]}
\node (2) {$\tau_2$};
\node[left =of 2] (1) {$\tau_1$};
\node[right =of 2] (3) {$\tau_3$};
\draw[<-] (1.east) --(2.west);
\draw[->] (2.east) --(3.west);
\draw[<-] (1) to [out=15,in=165] (3);
\node[above right] at (-0.1,0.05){\tiny$~I(\tau_1|\tau_3)$};
\node[below left] at (-0.025,-0.015){\tiny$~I(\tau_1|\tau_2)$};
\node[below right] at (+0.025,-0.015){\tiny$~I(\tau_3|\tau_2)$};
\end{tikzpicture}
}
\end{center}
\end{property} 
This follows by expanding the joint information in the three lists as follow:
\begin{eqnarray*}
I(\tau_1,\tau_2,\tau_3) 
&=& I(\tau_3) + I(\tau_1,\tau_2|\tau_3)
= I(\tau_2) + I(\tau_1,\tau_3|\tau_2)\\
~&=& I(\tau_3) + I(\tau_1|\tau_3) + I(\tau_2|\tau_1,\tau_3)
= I(\tau_2) + I(\tau_1,\tau_3|\tau_2)\\
~&=& I(\tau_3) + I(\tau_1|\tau_3) + I(\tau_2|\tau_3) 
\ge I(\tau_2) + I(\tau_1,\tau_3|\tau_2)\\
\end{eqnarray*}
Rearranging terms, we get:
\begin{eqnarray*}
\left(I(\tau_3) + I(\tau_2|\tau_3)\right) &+& I(\tau_1|\tau_3) 
\ge I(\tau_2) + I(\tau_1,\tau_3|\tau_2)\\
I(\tau_2,\tau_3) &+& I(\tau_1|\tau_3) 
\ge I(\tau_2) + I(\tau_1,\tau_3|\tau_2)\\
\left(I(\tau_2,\tau_3) - I(\tau_2)\right) &+& I(\tau_1|\tau_3) 
\ge  I(\tau_1,\tau_3|\tau_2)\\
I(\tau_3|\tau_2) &+& I(\tau_1|\tau_3) 
\ge  I(\tau_1,\tau_3|\tau_2)\\
I(\tau_3|\tau_2) &+& I(\tau_1|\tau_3) 
\ge  I(\tau_1|\tau_2) + I(\tau_3|\tau_1,\tau_2)\\
I(\tau_3|\tau_2) &+& I(\tau_1|\tau_3) 
\ge  I(\tau_1|\tau_2) \\
\end{eqnarray*}

\begin{property} (Near-coincidence of conditional information)\\
For any given \topklist\ $\tau$, $I(\tau|\tau) = \epsilon$, where $\epsilon$ is some small constant which is independent of the information in $\tau$.
\end{property}
This follows because $I(\tau,\tau)=I(\tau)+I(\tau|\tau)$. Knowing the \topklist\ $\tau$, the additional conditional information required to state a copy of itself is a very small constant.

\section{Practical Considerations}
\label{sec:practical}
This section will describe an approach to realize an information theoretic measure to quantify the variability of two \topklists. To make this measure intuitively understood, we describe the details as a communication process between an imaginary pair of transmitter and receiver connected over a Shannon channel.

Alice has access to two \topklists\ $\tau_1$ and $\tau_2$. Alice's goal is to communicate the information in both these lists to Bob \textit{exactly} as she see it. To achieve this, Alice aims to constructs a two-part message. In the first, she will transmit $\tau_1$ taking $I(\tau_1)$ bits. In the second, she aims to exploit the redundancy (if any) between the lists so that $\tau_2$ can be transmitted more concisely; this takes $I(\tau_2|\tau_1)$ bits. 

In this information theoretic framework the measure of (dis-)similarity between two \topklists\ is the total length of this two-part message, $I(\tau_1)+I(\tau_2|\tau_1)$. It is easy to see that if $\tau_2=\tau_1$, the second part is extremely concise. On the other hand, if $\tau_2$ is completely unrelated to $\tau_1$, then $I(\tau_2|\tau_1)$ cannot be better (\ie\ shorter) than  $I(\tau_2)$.

For Alice to transmit the two lists, $\tau_1$ and $\tau_2$ losslessly, the following information needs to be transmitted:
\begin{enumerate}
 \setlength{\itemsep}{1pt}
  \setlength{\parskip}{0pt}
  \setlength{\parsep}{0pt}
\item The size $k = |\tau_1| = |\tau_2|$ of the lists.
\item The elements in  $\tau_1$, in the order they appear.
\item The overlapping elements between $\tau_1$ and $\tau_2$.
\item The absolute positions of these overlapping elements in $\tau_2$.
\item The permutation of overlapping elements in $\tau_2$ with respect to the order defined by $\tau_1$
\item The non-overlapping elements in $\tau_2$ in the order they appear.
\end{enumerate}

Two distinct cases have to be handled to formulate encoding schemes for each of the above. (1) When the domain of elements that are being ranked, of which $\tau_1$ and $\tau_2$ are (partial) instances, is \textit{known}. For instance, consider the rankings of top 50 differentially expressed genes in a differential co-expression experiment. Here the total domain of gene and their labels (identifiers)
is known. (2) Conversely, when the domain of ranked elements remains \textit{unknown}. For instance, consider the search results from popular web search engines. While we see the top search results, the number of pages each search engine indexes is variable and could be fewer than the pages available on the internet.  

The remaining part of this section, we handle these two cases and describe encoding schemes to transmit for each of the enumerated pieces of information.

\subsection{Case 1: When the domain of elements is known.}
Here we assume that the size ($N$) of the domain is known along with the labels (or identifiers) of elements in it.

~\\
\noindent\textbf{\textit{Step 1: Transmitting the size of the \topklists.}} The size of $k\le N$ is transmitted as an integer code. Since both Alice and Bob know that the \topklists\ come from a domain of $N$ elements, a simple encoding of $k$  takes $\log(N)$ bits, assuming an uniform distribution over the choices of $k$ in the range $1\le k\le N$. We note that more sophisticated encodings can be conceived if their is a prior belief that the distribution of $k$ is non-uniform.

~\\
\noindent\textbf{\textit{Step 2: Transmitting $\tau_1$.}}
Then, transmitting the information in $\tau_1$ can be achieved by communicating, over an integer code, the lexicographic number associated with $\tau_1$ in some (mutually agreed) lexicographic ordering of the $k$-permutations of $N$ elements. Since both Alice and Bob know the domain from which the ranking was generated,  the lexicographic ordering  of $k$-permutations can be treated as a part of the code book of communication, and need not be transmitted. 

~\\
\noindent\textbf{\textit{Step 3: Transmitting overlapping elements between $\tau_1$ and $\tau_2$.}}
At this stage Bob already knows $\tau_1$. To nominate the overlapping elements, that is, the intersection between the two \topklists, a bit mask \texttt{$b_1$} is defined where the set bits indicate the positions in $\tau_1$ where the overlapping elements reside. Transmission complexity of stating the intersection between $\tau_1$ and $\tau_2$  is same as the complexity of this bit mask. An efficient encoding scheme to transmit this bit mask, assuming no prior knowledge about the distribution of the set bits, would be using an adaptive code over a binomial distribution. 

The mask $b_1$ is a binary sequence of length $k$. The adaptive encoding requires maintaining two running counters that count incrementally the number of 0s and number of 1s, starting from an initial value of 1. Traversing the bit mask left to right, for every symbol in $b_1$, Alice estimates its probability by dividing the current state of the symbol's counter by the sum of the two counters.  After the probability is estimated, Alice increments the corresponding counter by 1. The code length to state each symbol is the negative logarithm of its estimated probability. Generalizing this, if $cnt[0]$ is the number of 0s and $cnt[1]$ be the number of 1s in any bit mask of size $k$, then the length of the message to transmit this bit mask is 
$-\log_2\left(
\frac{cnt[0]!\times cnt[1]!}{(k+1)!}
\right)$ bits. Figure \ref{fig:adaptivecode}(a) gives an example.
Notice that both Alice and Bob initialize their counters to 1. Alice encodes each symbol in the bit mask and transmits it before incrementing the corresponding counter at her end. Bob decodes the received symbol using the same estimate of the probability and updates the counters on his side, thus keeping both counters synchronized to achieve a lossless communication.

\begin{figure}
\centering
\begin{tabular}{cc}
\begin{tabular}{|l|@{~}c@{~}c@{~}c@{~}c@{~}c@{~}c@{~}c@{~}c@{~}c@{~}c|}
\hline
 $b_1$      & 0           & 0           & 1           & 1           & 0           & 0           & 1           & 0           & 0            & 0\\\hline
 $cnt[0]$   & 1           & 2           & 3           & 3           & 3           & 4           & 5           & 5           & 6            & 7\\
 $cnt[1]$   & 1           & 1           & 1           & 2           & 3           & 3           & 3           & 4           & 4            & 4\\\hline
 Prob.   &$\frac{1}{2}$&$\frac{2}{3}$&$\frac{1}{4}$&$\frac{2}{5}$&$\frac{3}{6}$&$\frac{4}{7}$&$\frac{3}{8}$&$\frac{5}{9}$&$\frac{6}{10}$&$\frac{7}{11}$\\
\hline
\end{tabular}
&
\begin{tabular}{|l|@{~}c@{~}c@{~}c@{~}c@{~}c@{~}c@{~}c@{~}c@{~}c@{~}c|}
\hline
 $b_1$                      & 0           & 0           & 1           & 1           & 0           & 0           & 1            & 0            & 0            & 0\\
 $b_2$                      & 0           & 0           & 1           & 1           & 1           & 0           & 0            & 0            & 0            & 0\\\hline
 $cnt[0|0]$ or $cnt[1|1]$   & 1           & 2           & 3           & 4           & 5           & 5           & 6            & 6            & 7            & 8\\
 $cnt[0|1]$ or $cnt[1|0]$   & 1           & 1           & 1           & 1           & 1           & 2           & 2            & 3            & 3            & 3\\\hline
 Prob.                      &$\frac{1}{2}$&$\frac{2}{3}$&$\frac{3}{4}$&$\frac{4}{5}$&$\frac{1}{6}$&$\frac{5}{7}$&$\frac{2}{8}$&$\frac{6}{9}$&$\frac{7}{10}$&$\frac{8}{11}$ \\
\hline
\end{tabular} \\
(a)&(b) \\
\end{tabular}
\caption{Examples of the adaptive encoding schemes for bit masks described in the main text.}
\label{fig:adaptivecode}
\end{figure}

~\\
\noindent\textbf{\textit{Step 4: Transmitting absolute positions in $\tau_2$ of the overlapping elements.}}
This again defines another bit mask, $b_2$. It is easy to see that there are $\binom{k}{cnt[1]}$ possible candidates for $b_2$, given that Bob already knows $b_1$.  Therefore, assuming these candidates are uniformly distributed, the optimal message length to state $b_2$ takes $\log\binom{k}{cnt[1]}$ bits.
We emphasize here that $b_2$ ignores the permutation of the overlapping elements as they appear in $\tau_2$ (with respect to $\tau_1$) -- this is handled in the next step.

While the above encoding is optimal, it, however, does not account for the displacement of overlapping elements in terms of their absolute ranks in the list. It might arise in some applications that the displacement is among the criteria of comparing two lists. Hence we propose a modified adaptive scheme to account for this displacement. We use two  counters; the first tracks the number of times the symbols in bit masks $b_1$ and $b_2$ remain the same at a given position (column); the second tracks the number of times they are different. These counters are used to estimate the probabilities while traversing along $b_2$. See Figure \ref{fig:adaptivecode}(b) for an example.

\begin{figure}[h!]
\centering
\begin{tabular}{||r|rccl||r|rccl||r|rccl||r|rccl||}
\hline\hline
 $0_{10}$&a&b&c&d  & $6_{10}$&b&a&c&d & $12_{10}$&c&a&b&d & $18_{10}$&d&a&b&c\\
   &(0&0&0&0)$_!$  &       &(1&0&0&0)$_!$ &  &(2&0&0&0)$_!$&   &(3&0&0&0)$_!$\\
         \hline
 $1_{10}$&a&b&d&c & $7_{10}$&b&a&d&c & $13_{10}$&c&a&d&b & $19_{10}$&d&a&c&b\\
         &(0&0&1&0)$_!$ &         &(1&0&1&0)$_!$ &          &(2&0&1&0)$_!$ &          &(3&0&1&0)$_!$\\
         \hline
 $2_{10}$&a&c&b&d & $8_{10}$&b&c&a&d & $14_{10}$&c&b&a&d & $20_{10}$&d&b&a&c\\
         &(0&1&0&0)$_!$ &         &(1&1&0&0)$_!$ &          &(2&1&0&0)$_!$ &          &(3&1&0&0)$_!$\\
         \hline
 $3_{10}$&a&c&d&b & $9_{10}$&b&c&d&a & $15_{10}$&c&b&d&a & $21_{10}$&d&b&c&a\\
         &(0&1&1&0)$_!$ &         &(1&1&1&0)$_!$ &          &(2&1&1&0)$_!$ &          &(3&1&1&0)$_!$\\
         \hline
 $4_{10}$&a&d&b&c &$10_{10}$&b&d&a&c & $16_{10}$&c&d&a&b & $22_{10}$&d&c&a&b\\
         &(0&2&0&0)$_!$ &         &(1&2&0&0)$_!$ &          &(2&2&0&0)$_!$ &          &(3&2&0&0)$_!$\\
         \hline
 $5_{10}$&a&d&c&b &$11_{10}$&b&d&c&a & $17_{10}$&c&d&b&a & $23_{10}$&d&c&b&a\\
         &(0&2&1&0)$_!$ &         &(1&2&1&0)$_!$ &          &(2&2&1&0)$_!$ &          &(3&2&1&0)$_!$\\
         \hline\hline
\end{tabular}
\caption{All possible permutation of elements \texttt{a,b,c,d}, their lexicographical number in base 10, and their corresponding sequence of digits in a factorial number system. The factoradic system defines a bijection between the permutation and its lexicographic number. For example, $21_{10} = (3,1,1,0)_! =  3\times 3! + 1\times 2! + 1\times 1!+ 0.$}
\label{fig:factoradic}
\end{figure}

~\\
\noindent\textbf{\textit{Step 5: Transmitting the permutation of overlapping elements in $\tau_2$ with respect to $\tau_2$.}}
From the previous step, Bob knows what the overlapping elements between the lists are, but does not know in what order they appear in $\tau_2$. To transmit the permutation of these overlapping elements efficiently, a lexicographic numbering can be mutually agreed between them (as a part of the code book). Then, transmitting the permutation of these elements requires simply communicating its lexicographic number over some integer code. 

However, to make this transmission efficient 
a factoradic (or mixed factorial base numbering system) can be employed~\cite{donald1999art}. This system defines a bijection between the symmetric group $S_n$ to $n!$ possible permutations in that group.  

Concretely, let $\pi= \{\pi(i),\pi(2),\cdots,\pi(n)\}$ be some permutation of $n$ symbols, where $\pi(i)$ is the rank of the $i$th element in the permutation. A factoradic of $\pi$ defines a sequence $f(\pi) = (f^1, f^2, \cdots, f^n)_!$, where any  $f^i$ is the number of $j$s  greater than $i$ such that $\pi(i)<\pi(j)$. See Figure \ref{fig:factoradic} for an example of a lexicographic ordering of the symmetric group $S_4$ labeled by elements `\texttt{a,b,c,d}', along with its corresponding factoradic sequence of digits. It can be observed that each factoradic digit $f^i$  denotes the number of successive \textit{adjacent transpositions}  on $\pi$ required to move each $\pi(i)$th element into its correct position.  The permutation index (in decimal) can be computed from a factoradic as $\sum_{i=0}^n f^i\times(n-i-1)!$. 

The sequence of digits $f(\pi)$ has several interesting properties. It has been shown that, if permutations in $S_n$ are distributed uniformly, each factoradic digit $f^i$ is also uniformly distributed in the range $0\le f^i\le (n-i-1)$~\cite{lehmer1960teaching}. Also, the factoradic digits $f^i$ are mutually independent of each other because they form projections on independent factors in the product $n\times(n-1)\times\cdots 1 \equiv n!$

Thus, transmitting a permutation of overlapping elements in $\tau_2$ involves transmitting its factoradic digits in sequence. For each factoradic digit $f^i$ in the range $0\le i< n$ (note: $f^n$ is always 0), any decreasing probability distribution on integers in that range can be used. Specifically,
we use a Wallace tree code~\cite{wallace1993coding} that defines a code over positive integers\footnote{Since factoradic digits start from 0, we just add 1 to each digit to map it to the Wallace tree code.} by associating each integer with the binary code used to uniquely identify a binary tree. (See \url{http://www.allisons.org/ll/MML/Discrete/Integers/} for more details.) Since this integer code is defined over the infinite space of positive integers, the probability associated with each code is normalized such that the total probability in the finite range $0\le f^i\le n-i-1$ adds up to 1.

~\\
\noindent\textbf{\textit{Step 6: Transmitting non-overlapping elements in $\tau_2$.}}
Given that the domain of elements is known and is of size $N$, each non-overlapping element can be stated in $\log_2(N-|\tau_1\cup\tau_2|)$ bits.  With this the communication process concludes.

\subsection{Case 2: When the domain of elements is unknown.}
\label{sec:case2}
Here Alice and Bob do not know the domain of elements being sorted. In the previous case, Steps 1,2 and 6 depended on knowing the domain, and hence require modification. The encodings for Steps 3,4, and 5 remain exactly the same as previously described.

Since this framework relies on lossless transmission, and there is no prior knowledge of the domain of possible labels in each of the two \topklists, this requires the lists (along with its labels) to be explicitly communicated.

To efficiently communicate $\tau_1$ and the non-overlapping elements in $\tau_2$, consider the  union of the two lists, $\tau_1\cup\tau_2$, such that the top $k$ elements define labels in $\tau_1$ (in that order) and the remainder are the labels of non-overlapping elements in the order they appear in $\tau_2$.

First, the size of the union $|\tau_1\cup\tau_2|$ is transmitted using the Wallace tree code defined over all positive integers. (This modifies previous Step 1.)
Then the labels in the $\tau_1\cup\tau_2$ can be compressed using, for instance, a standard, dictionary-based lossless data compression algorithm  of Lempel-Ziv-Welch ($LZW$)~\cite{ziv1978compression}.\footnote{A further entropy encoding  using Huffman's coding can be applied to the output symbols from $LZW$ compression.} The length, $|LZW(\tau_1\cup\tau_2)|$, in bits gives the cost to state the information in $\tau_1$ and non-overlapping elements in $\tau_2$. (This modifies previous Steps 2 and 6). 

\subsection{Time complexity}
In case 1: Steps 1,2, and 6 take $O(1)$ time to compute. The adaptive codes in Steps 3 and 4 take $O(k)$. In Step 5, finding the factoradic of a permutation of $n$ elements can be achieved in $O(n)$ time. (Refer \cite{myrvold2001ranking}; also note, when comparing top $k$ lists, $n\le k$.) Computing the code length of each factoradic takes $O(1)$ time.
Thus, the total time complexity to estimate the information content in the two lists grows as $O(k)$.

In case 2: Step 1 requires $O(1)$ time. Steps 2 and 6 are dealt together  and involves compression of labels in the set $\{\tau_1\cup\tau_2\}$. It can easily be seen that $k\le |\tau_1\cup\tau_2|\le 2k$. $LZW$ compression implemented naively has a time complexity of $O(SD)$, where $S$ is the number of input symbols and $D$ is the size of the dictionary. For most practical applications, $S$ is $O(k)$ in size and $D$ is a constant. Steps 3, 4, and 5 don't change from case 1 so will have the same time complexity. Thus, the total time complexity to estimate of information content is dominated by the $LZW$ compression which in practice grows as $O(k)$. 

\begin{figure}[t]
\begin{tabular}{cc}
\includegraphics[width=3in]{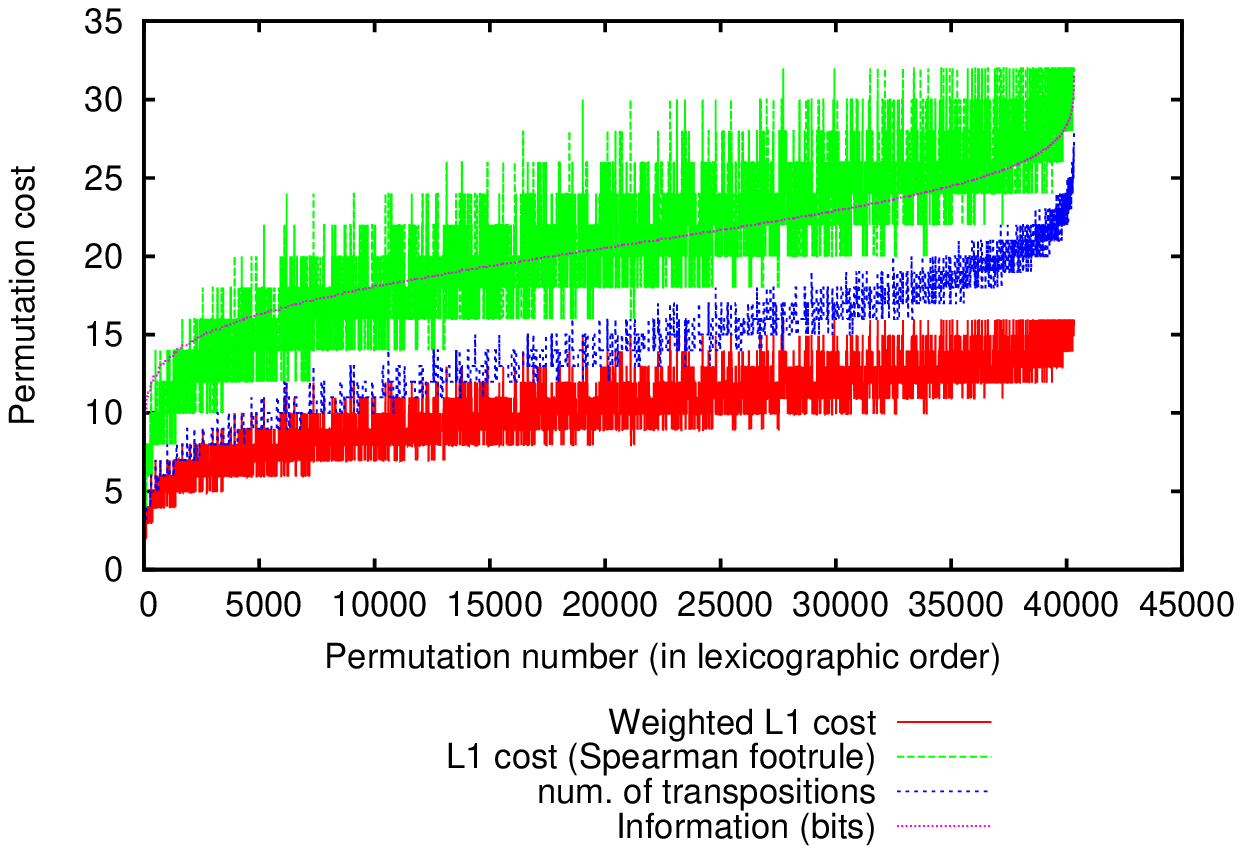}&
\includegraphics[width=3in]{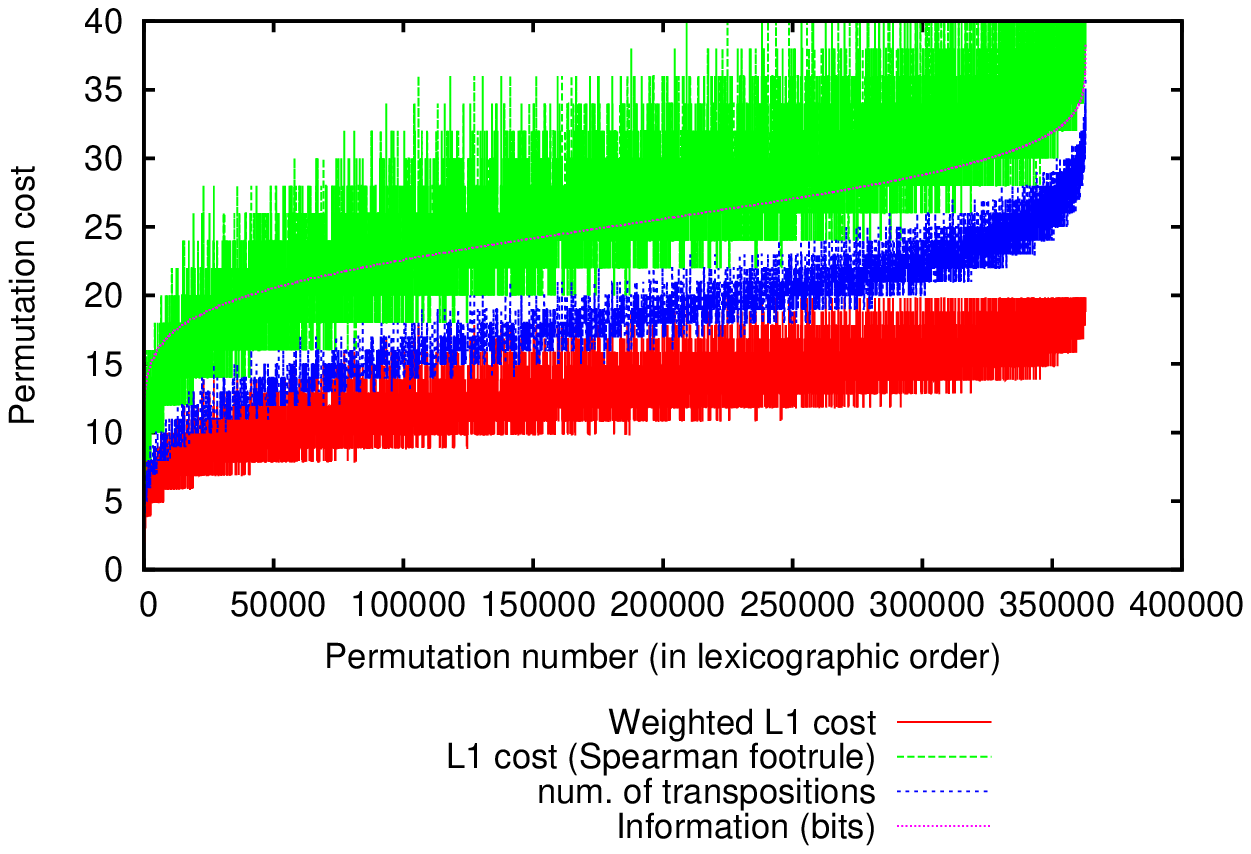}\\
(a) & (b) \\
\end{tabular}
\caption{Variation of costs over the set of all permutations in the the symmetric groups (a) $S_8$, and (b) $S_9$. Spearman's foot rule distance is in Green. Kendall tau distance is in Blue. Canberra distance is given in Red. Information measure defined in this work is given in Magenta.}
\label{fig:permcost}
\end{figure}

\section{Results}
\label{sec:results} 
We first quantify the effect of disarray between permutations as assessed by various popular measures.  
Figure \ref{fig:permcost} gives the cost associated with various measures for the set of all permutations in symmetric groups (a) $S_7$ and (b) $S_8$. Specifically, the measures used are: (1) Spearman's foot rule metric ($L_1$ distance), (2) Canberra distance (weighted $L_1$ measure), (3) Kendall's tau distance, measuring the number of adjacent transpositions to sort a permutation, and (4) the information measure we developed in this work. It can be seen that as the information content to describe a permutation increases, all the other measures vary significantly. It is important to note that  the costs reported by all four of the considered measures are related to the total number of adjacent transpositions of elements required to sort the permutation. However, our measure of information accounts for the varying magnitude of disarray (given by the permutation's factoradic digits) of each element, instead of combining and summarizing using a simple number. Other measures overlook these individual contributions; for instance, it can be seen from Figure \ref{fig:factoradic} that the permutations $adcb =5_{10} = (0,2,1,0)_!$, $bcda =9_{10} = (1,1,1,0)_!$, $bdca =10_{10} = (1,2,0,0)_!$, $cadb =13_{10}=(2,0,1,0)_!$, and $dabc =18_{10}=(3,0,0,0)_!)$ all require the same number of transpositions ($=3$), yet differing in the number of individual transpositions required by its elements.  

To examine the performance of various measures on comparing \topklists, we first consider three top 250 movie lists downloaded from \url{goodmovieslist.com}, \url{imdb.com} and \url{reddit.com}. Figure \ref{fig:movies} shows the comparisons of (left to right)  Goodmovies vs. IMDb, Goodmovies vs. Reddit, and IMDb vs. Reddit, while varying $k$ from 1 to 250 in increments of one.

Qualitatively the lists corresponding to Goodmovies and IMDb  are more similar than the other possible pairs.  It can be seen from the figure that both Spearman's foot rule distance and Kentall tau distance grow roughly quadratically with the size of $k$. This mainly results from the contributions to the respective costs from the set of non-overlapping elements. As this set grows, its contribution to the distance dominates. However, the growth of information cost\footnote{We note that, for these results and those to follow, we compute the measure of information between lists by assuming that the total domain of movies is unknown, \ie\ following case 2 described in Section \ref{sec:case2}.}
is roughly linear. This makes more sense, as information is additive. When the size of the list increases from $k$ to $k+1$, the new element that gets added to each of the two lists can in the worst case be independent of the previous information. This implies that, in the worst case, the total information content in the list going from $k$ to $k+1$ gets augmented by the sum of information in the new elements. Therefore, the quadratic grown of the other measures is questionable.

It is interesting to note that while Spearman's foot rule and Kendall tau distances monotonically increase, the information cost plotted in the figure has `fluctuations' in the amount of information measured. These variations occurs when new elements (for increasing values of $k$)  cause the set of overlapping elements to grow in size. While this increases the cost to state the permutation of overlapping elements, there is a net saving because the size of the set of non-overlapping elements (which are transmitted using $LZW$ compression) decreases, in comparison with the previous values of $k$.

\begin{figure}
\hspace{-0.5in}
\includegraphics[width=1.6in, angle=-90]{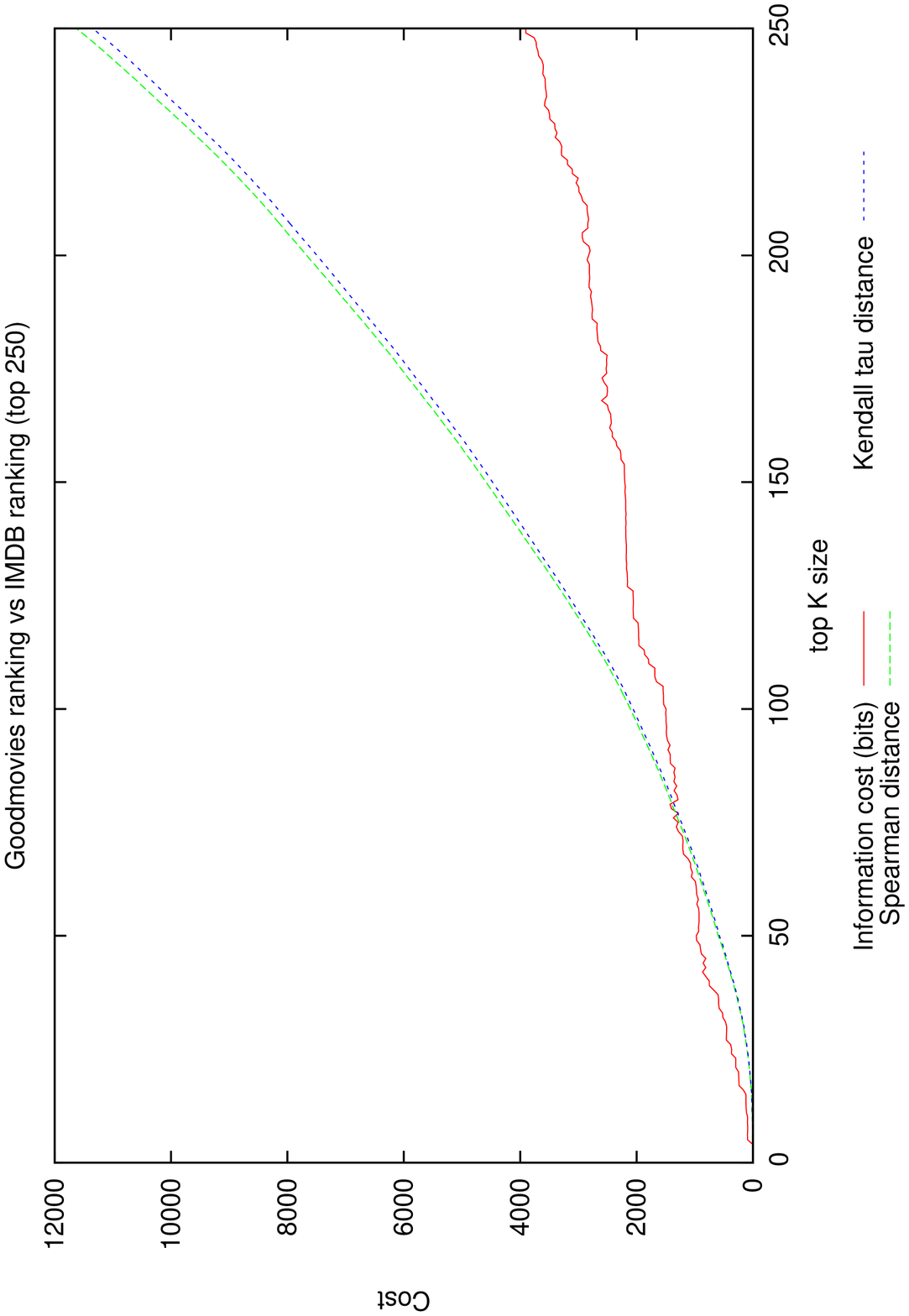}
\includegraphics[width=1.6in, angle=-90]{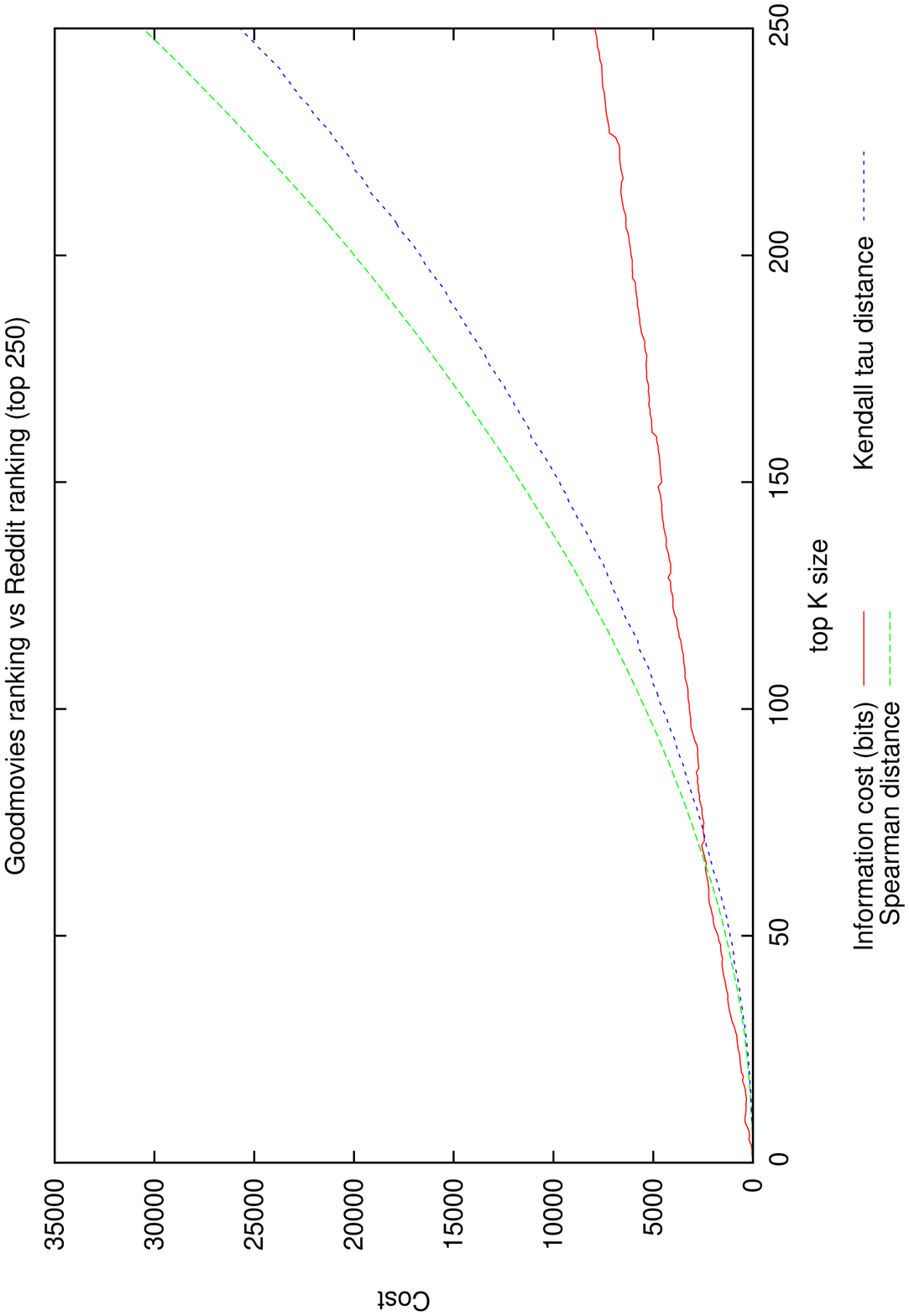}
\includegraphics[width=1.6in, angle=-90]{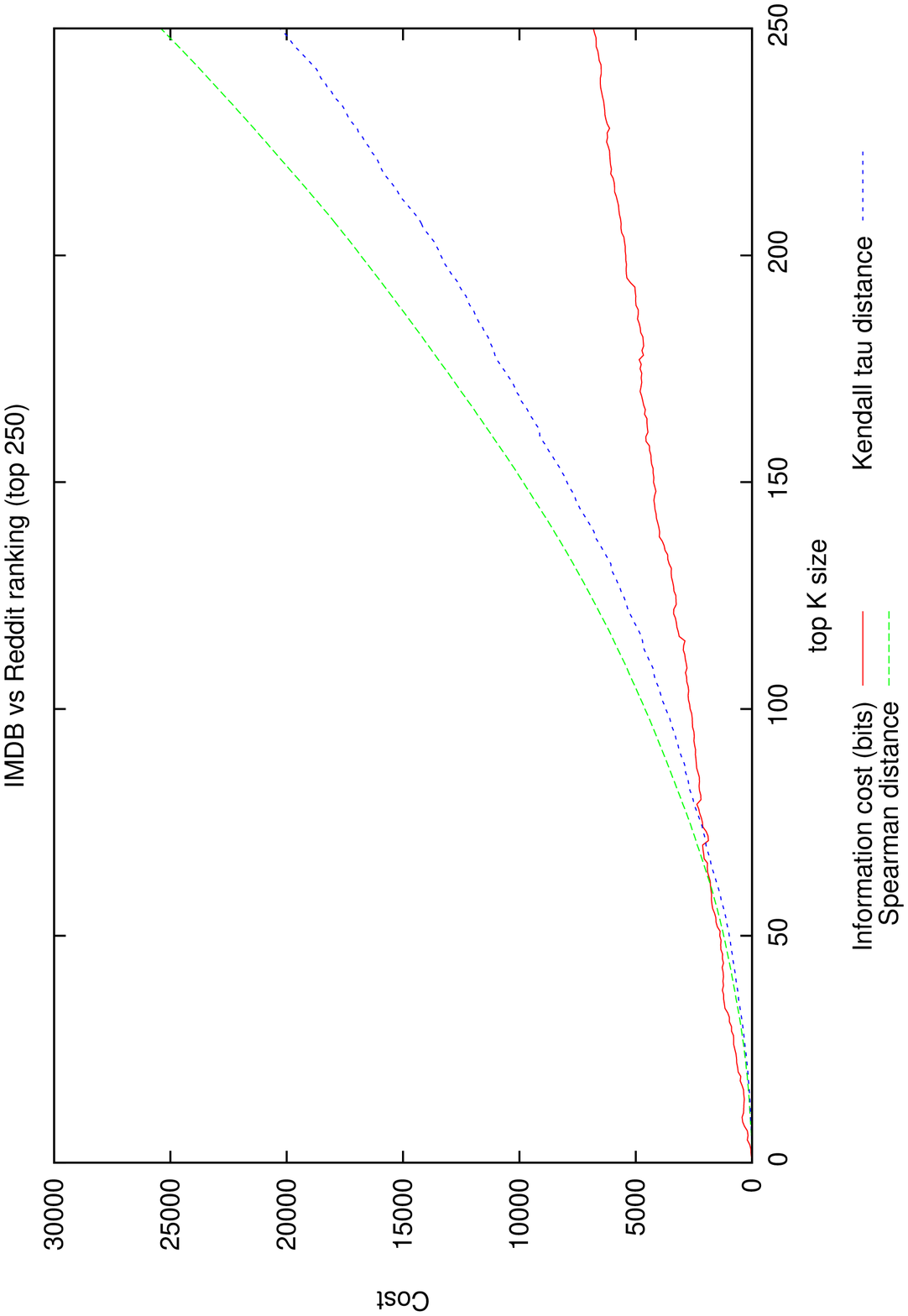}
\caption{Comparison of Spearman's foot rule distance, Kendall tau and Information distance on the rankings from \texttt{goodmovieslist.com}, \texttt{imdb.com} and \texttt{reddit.com}}
\label{fig:movies}
\end{figure}

To undertake this comparison in a large scale, we compare the search results of three popular web search engines: Google, Yahoo and Ask. We do this by selecting  250 top trending search and news terms reported by Google Trends and Yahoo text Analytics for the regions of Australia, US, India, Canada, UK, Singapore and Germany.  Figure \ref{fig:search} plots the average (mean) cost over all the 250 queries computing using Information, Spearman's foot rule and Kendall tau measures. In this experiment, we vary $k$ as $10,25,50,75,$ and $100$. In this figure the same growth trends witness previously emerges, linear for information cost and quadratic for Spearman and Kendall distance. 
For $k>50$, the difference between the avg costs between each pair of search engine results grows drastically for Spearman and Kendall distances, while the same using information cost does not.

\begin{figure}
\hspace{-0.5in}
\includegraphics[width=1.6in, angle=-90]{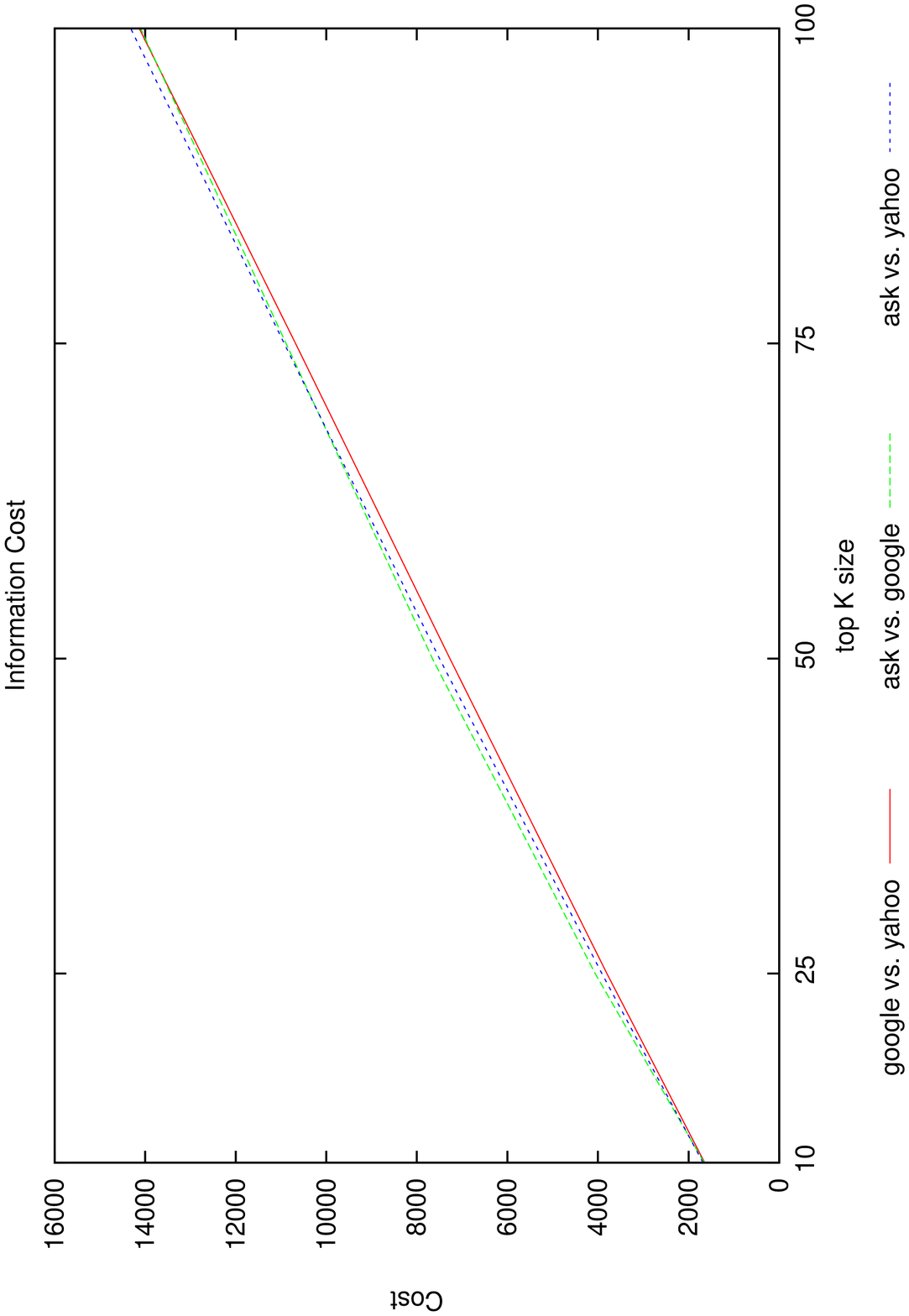}
\includegraphics[width=1.6in, angle=-90]{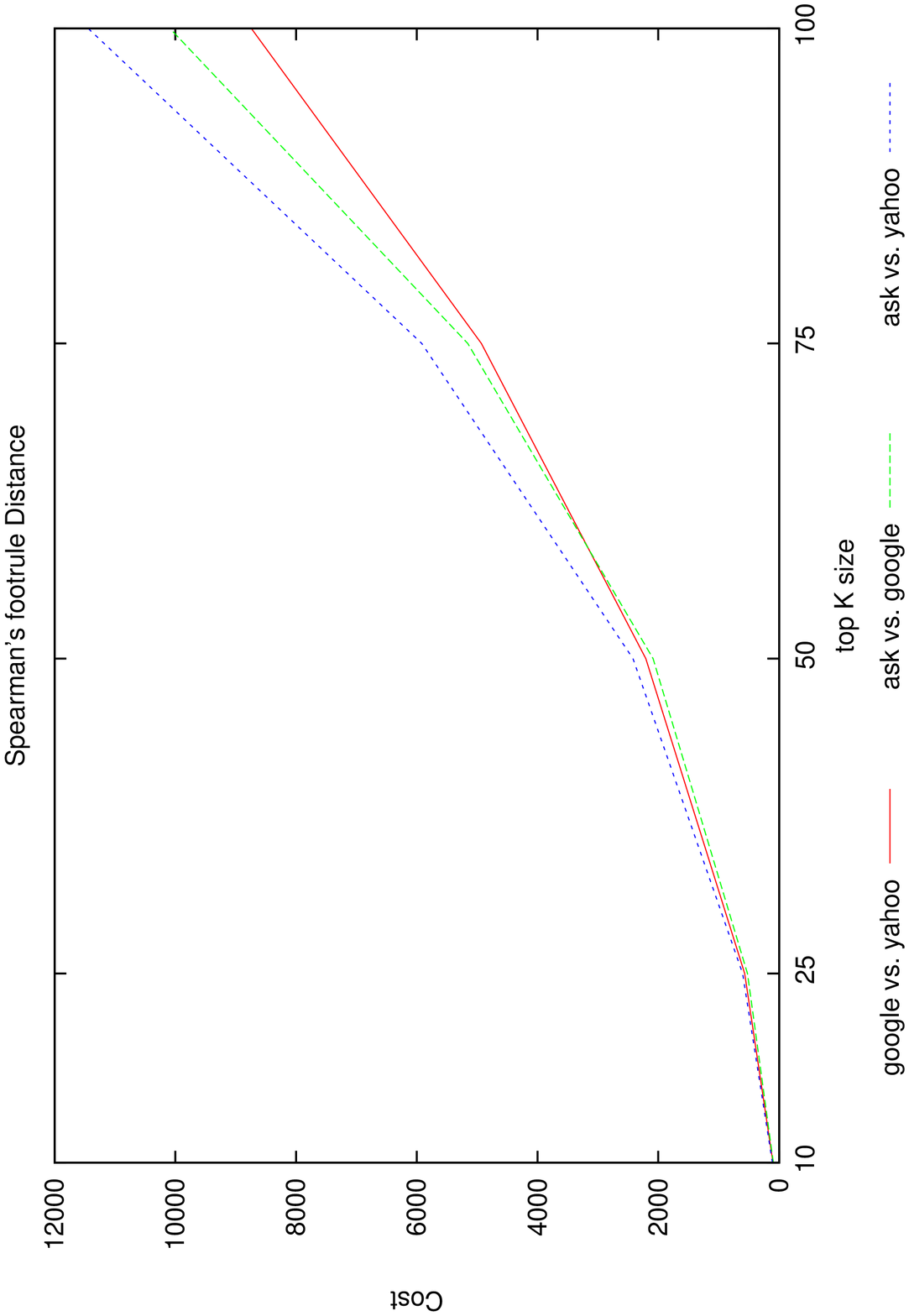}
\includegraphics[width=1.6in, angle=-90]{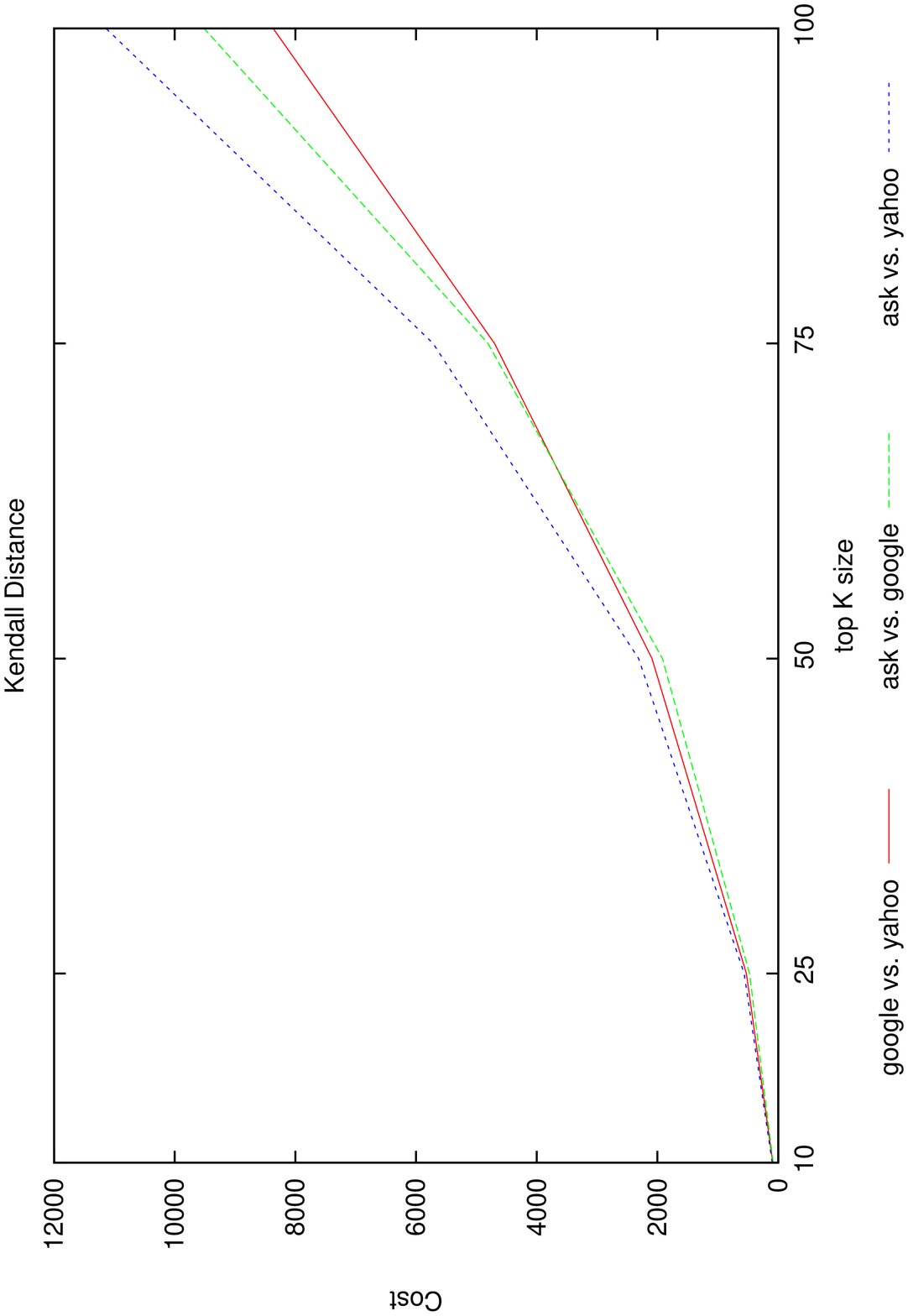}

\caption{Comparison of Information distance, Spearman's foot rule distance, and Kendall tau on search results return by Google, Yahoo and Ask. The reported values are averaged over 250 search terms comparing pairs of ranked lists for values of $k= \{10,25,75, 100\}$}
\label{fig:search}
\end{figure}

\section{Conclusion}
We have introduced a new information measure for comparing any two top $k$ lists. By exploring their compressibility, our method provides a statistically rigorous measure of variability between ranked lists. It provides an objective trade-off between criteria that measure the dis-similarity between lists, addressing the lacunae and pitfalls in the existing measures. As a future direction of research, this measure can be used to address the important \textit{rank aggregation problem}: What is the `consensus' top $k$ ranking that combines the top $k$ results from multiple sources.

~\\
\noindent\textbf{Acknowledgments.} We thank Chetana Gavankar for rekindling our interest on this problem. ASK thanks Lloyd Allison for numerous discussions and helpful suggestions on this topic.

\bibliographystyle{acm}
\bibliography{soda}
\end{document}